\def\tit{Generic reductions for in-place polynomial multiplication}
\Crefname{enumi}{Step}{Steps}
\algnewcommand\Input{\item[\textbf{Input:}]}
\algnewcommand\Output{\item[\textbf{Output:}]}
\algnewcommand\BaseAlg{\item[\textbf{Required alg.:}]}
\newtheorem{thm}{Theorem}[section]   
\newtheorem{lem}[thm]{Lemma}
\newtheorem{cor}[thm]{Corollary}
\newtheorem{definition}[thm]{Definition}
\newcommand{\K}{\ensuremath{\mathbb{K}}} 
\newcommand{\M}{{\sf M}}
\newcommand{\bigO}[1]{\ensuremath{O(#1)}} 
\newcommand\algo[1]{\ensuremath{\mathtt{#1}}\xspace}
\newcommand\fct[1]{\ensuremath{\mathsf{#1}}\xspace}
\newcommand\reduction[2]{\algo{#1\_from\_#2}}
\newcommand\FP{\fct{FP}}
\newcommand\SP{\fct{SP}}
\newcommand\MP{\fct{MP}}
\newcommand\FPhA{\FP^+}
\newcommand\SPhi{\fct{SP_{hi}}}
\newcommand\SPlo{\fct{SP_{lo}}}
\newcommand\FPlo{\fct{\FPhA_{lo}}}
\newcommand\FPhi{\fct{\FPhA_{hi}}}
\newcommand\oFPunbalhA{\fct{oFP^+_u}}
\newcommand\oFP{\algo{oFP}}
\newcommand\iFPhi{\algo{iFP^+_{hi}}}
\newcommand\oMP{\algo{oMP}}
\newcommand\iMP{\algo{iMP}}
\newcommand\oSP{\algo{oSP}}
\newcommand\oSPlo{\algo{oSP_{lo}}}
\newcommand\oSPhi{\algo{oSP_{hi}}}
\newcommand\iSPlo{\algo{iSP_{lo}}}
\newcommand\mat[1]{\ensuremath{\mathfrak{M}_{#1}}\xspace}
\newcommand\TISP{\ensuremath{\mathsf{TISP}}\xspace}
\newcommand\redto{\ensuremath{\le_\TISP}\xspace}
\newcommand\eqto{\ensuremath{\equiv_\TISP}\xspace}
\def\quad{\hskip1em\relax}
\def\qquad{\hskip2em\relax}
\newcommand\out[1][R]{\texttt{#1}}
\DeclareMathOperator{\quo}{quo}
\DeclareMathOperator{\rev}{rev}
\newcommand\floor[1]{\left\lfloor #1\right\rfloor}
\newcommand\ceil[1]{\left\lceil #1\right\rceil}
\newcommand{\algoname}[1]{{\normalfont\textsc{#1}}}
\newcommand\svg[2][]{\ifx#1\@empty\else\def\svgwidth{#1}\fi\input{#2-svg.tex}}
\newcommand*{\keywords}[1]{\textbf{\textit{Keywords---}} #1}
\def\grantsponsor#1#2#3{#2}
\newcommand\grantnum[3][]{#3%
  \def\@tempa{#1}\ifx\@tempa\@empty\else\space(\url{#1})\fi}
\begin{document}
\title{\tit}

\author{Pascal Giorgi\thanks{  LIRMM, Université de Montpellier,
    CNRS,Montpellier, France. \url{(firstname.lastname@lirmm.fr)}},
    Bruno Grenet\footnotemark[1] and 
    Daniel S.\ Roche\thanks{  United States Naval Academy, Annapolis, Maryland,
      USA. \url{roche@usna.edu}}
  }

\maketitle

\begin{abstract}
  The polynomial multiplication problem has attracted considerable attention since the
  early days of computer algebra, and several algorithms have been designed to
  achieve the best possible time complexity. More recently, efforts have
  been made to improve the space complexity, developing modified
  versions of a few specific algorithms to use no extra space while
  keeping the same asymptotic running time.

  In this work, we broaden the scope in two regards. First, we ask
  whether an arbitrary multiplication
  algorithm can be performed in-place generically. Second, we consider two important
  variants which produce only part of the result (and hence have less
  space to work with), the so-called middle and short products, and ask
  whether these operations can also be performed in-place.

  To answer both questions in (mostly) the affirmative, we provide a
  series of reductions starting with any linear-space multiplication
  algorithm.  For full
  and short product algorithms these reductions yield in-place versions with the
  same asymptotic time complexity as the out-of-place version.
  For the middle product, the reduction incurs an extra logarithmic
  factor in the time complexity only when the algorithm is quasi-linear.
\end{abstract}

\keywords{
  arithmetic, polynomial multiplication, in-place algorithm, self reduction
}

\section{Introduction}

\subsection{Polynomial multiplication}

Polynomial multiplication is a fundamental problem in mathematical
algorithms. It forms the basis (and key bottleneck) for
other fundamental problems such as division with remainder, GCD computation,
evaluation/interpolation, resultants, factorization, and structured linear algebra
(see, e.g., \cite[\S 8--15]{Gathen:2013} and
\cite[\S 2--7,10,12]{BCG+17}).

As such, significant effort has gone to
improving the time to multiply two size-$n$ polynomials, most
notably Karatsuba's algorithm \cite{karatsuba}, Toom-Cook multiplication
\cite{Cook66}, and Sch\"onhage-Strassen \cite{SS71}; more recent results
have improved the complexity further but have not yet seen wide
adoption in practice \cite{CK91,HHL17}.

\subsection{Space complexity}

After minimizing the runtime, an important question both in theory and
in practice is how much extra \emph{space} these algorithms require.
While the classical algorithm
can be made to use only a constant number of temporary
values, all the faster algorithms mentioned above require $\bigO{n}$
space to multiply two size-$n$ polynomials. In fact, proven time-space
trade-offs in the algebraic circuit and branching program models indicate
that space at least polynomial in $n$ is required for any sub-quadratic
multiplication algorithm \cite{SS79,Abr86}.

But in a model where the output space admits both random writes
\emph{and} reads, these time-space lower bounds can be broken.
\cite{roche:2009} developed
a variant of Karatsuba's algorithm using only
$\bigO{\log n}$ space. Later, an FFT-based
multiplication algorithm using $\bigO{n\log n}$ time and constant space
was developed for the case that the coefficient ring contains a suitable
root of unity \cite{hr10}. Space-saving versions of Karatsuba's
algorithm can also be found in
\cite{Thome:2002,Brent:2010:MCA,SF12,Che16}.

\subsection{Short and middle products}

Besides the usual \emph{full product} computation,
two other variants have also been extensively
studied: the \emph{short product} which truncates the output to the
first $n$ terms, and the \emph{middle product}
which truncates the result on both
ends. These variants are important especially for
power series, and specific variants of Karatsuba's
algorithm and others have been developed, usually gaining
a constant factor compared to a full product followed by a
truncation \cite{HQZ00,Mulders:2000,Hanrot2004,HanrotZimmerman:2004}.

\cite{Bostan:2003} shows that the middle
product can be viewed essentially as the reverse of a full product and
in the same space. However, in our model which uses the space of the
output as temporary working space, this reversal implies that the inputs
must also be destroyed for an in-place middle product.
In some sense
it would not be surprising if middle and short products were more
difficult in our setting, as the truncated size of the output
essentially limits the working space of the algorithm.

\subsection{Our work}

In this paper, we develop reductions which can transform any
multiplication algorithm which uses $\bigO{n}$ extra space into full,
short, and middle product algorithms which use only $\bigO{1}$ extra
space. The time complexity for full and short product is the same as
that of the original, while that for middle product incurs an additional
$\log n$ factor.

This improves the \(\bigO{\log n}\) space of the most
space-constrained Karatsuba algorithm \cite{roche:2009}, and implies for the
first time: in-place versions of Toom-Cook multiplication; in-place
FFT-based multiplication even when the ring does not contain a root of
unity; in-place subquadratic short product algorithms; and in-place middle product
algorithms which do not overwrite their inputs.

We begin by carefully stating our space complexity model and then
defining the multiplications problems in \cref{sec:models,sec:polymul}.
A few easier but
important reductions and equivalences are presented next in
\cref{sec:tisp-preserving}, followed by the critical reductions in
\cref{sec:out2in} which prove our main results.

\section{Complexity model} \label{sec:models}
We use the model
of an \emph{algebraic-RAM} that is equipped with two kinds of
registers: the standard registers store integers as in the classical Word-RAM
model, whereas the \emph{algebraic registers} store elements from the base field $\K$
of coefficients.
As in Word-RAM, we assume that the standard
registers can store integers of size $\bigO{\log n}$ where $n$ is the number of
coefficients in the inputs.

Word-RAM machines are a classical model in computational complexity, in
particular for \emph{fine-grained complexity} that classifies the difficulty of
polynomial-time problems \cite{Va18}.  We use it in order to distinguish between
the space needed to store indices (that is thus hidden in the standard
registers) from the space needed to store elements from the base field.

\paragraph{Time complexity}
As mentioned, we use the number of arithmetic operations as the time complexity
measure since the cost of the operations on indices is negligible with respect
to arithmetic operations. Formally, we assume that any ring operation on the
algebraic registers has cost $1$.

\paragraph{Space complexity}
We divide the registers into three categories: the input space
is made of the (algebraic) registers that store the inputs, the output space is
made of the (algebraic) registers where the output must be written, and the work
space is made of (algebraic and non-algebraic) registers that are used as extra
space during the computation. The space complexity is then the maximum number of
work registers used simultaneously during the computation. An algorithm is said
to be ``in-place'' if its space complexity is $\bigO 1$, and ``out-of-place''
otherwise.

One can then distinguish different models depending on the read/write
permissions on
the input and output registers:
\begin{enumerate}
\item Input space is read-only, output space write-only;
\item Input space is read-only, output space is read/write;
\item Input and output spaces are both read/write. 
\end{enumerate}
The first model is the classical one from complexity theory~\cite{ArBa09}.
Despite its theoretical interest,
it does not reflect low-level computation where output is typically
in some DRAM or Flash memory on which reading is no more costly
than writing.  Furthermore, polynomial
multiplication here has a quadratic lower bound for time times space
\cite{Abr86}, limiting the possibility for meaningful improvements.

The second model has been used in the context of in-place polynomial
multiplication~\cite{roche:2009,hr10}. This is a very reasonable model
since it matches the paradigm of parallel computing with shared memory.
This is the model in which we develop our algorithms.

The third model has been used to
provide a generic approach for preserving memory
designing algorithms via the transposition principle~\cite{Bostan:2003}: Given an
algorithm for a linear map
with time complexity $t(n)$ and space complexity $s(n)$, the
transposition principle yields an algorithm for the transposed linear
map which has the same space complexity and time complexity
$\bigO{t(n)}$~\cite[Propositions 1 and 2]{Bostan:2003}.
However, the inputs are destroyed during the computation, which is
problematic particularly for recursive algorithms that re-use their
operands; we will not use this too-permissive model.

\paragraph{Notation} 
The output space in our algorithms is denoted by $\out$ and registers
are indexed from $0$ to $n-1$. We write $\out_{[k..\ell[}$ to denote the registers of indices $k$ to $\ell-1$.

\section{Polynomial multiplications} \label{sec:polymul}

Define the \emph{size} of a univariate polynomial as the number of
coefficients in its (dense) representation; a polynomial of size $n$ has
degree at most $n-1$. Importantly, we allow zero padding: a size-$n$
polynomial could have degree strictly less than $n-1$; the size
indicates only how it is represented.

Let $f=\sum_{i=0}^{n-1}f_iX^i$ and $g=\sum_{i=0}^{n-1}g_iX^i$ be two
size-$n$ polynomials. Their product
$h=fg$  is a polynomial of size $2n-1$,
what we call a \emph{balanced full product}.
More generally, if $f$ has size $m$ and $g$ has size $n$, their product
has size $m+n-1$. We call this case the \emph{unbalanced full product} of $f$ and $g$.

We now define precisely the short product, middle product, and
half-additive full product.

\begin{definition}
Let $f$ and $g$ be two size-$n$ polynomials. Their \emph{low short
product} is the size-$n$ polynomial defined as
\[\SPlo(f, g) = (f\cdot g) \bmod X^n\]
and their \emph{high short product} is the size-$(n-1)$ polynomial
defined as
\[\SPhi(f,g) = (f\cdot g) \quo X^n.\]
\end{definition}

The low short product is actually the meaningful notion of product for
truncated power series. Note also that the definition of the high short
product that we use implies that the result does not depend on all the
coefficients of $f$ and $g$. The rationale for this choice is to have
the identity $fg = \SPlo(f,g) + X^n \SPhi(f,g)$.

\begin{definition}\label{def:mp}
Let $f$ and $g$ be two polynomials sizes $n+m-1$ and $n$, respectively.
Their \emph{middle product} is the size-$m$ made of the central coefficients of the product $fg$, that is
\[\MP(f,g) = \left((f\cdot g)\quo X^{n-1}\right)\bmod X^m.\]
\end{definition}

If $f = \sum_{i<n+m-1} f_i X^i$ and $g = \sum_{j < n} g_j X^j$, then
\[\MP(f,g) = \sum_{n-1\le i+j\le n+m-1} f_ig_j X^{i+j-n+1}.\]
The middle product, most commonly in the special case $n=m$, arises
naturally in several algorithms manipulating polynomials or power series
which are based on Newton's iteration, such as division or square root~\cite{Hanrot2004}.

Further, the middle product is obtained by \emph{Tellegen's transposition
principle} from the full product algorithm~\cite{Bostan:2003,Hanrot2004}. This
implies that any full product algorithm yields an algorithm for the middle
product of same time complexity. On the other hand, whether the transposition
can be performed while also preserving the space complexity remains an open
problem~\cite{Kaltofen:2000, Bostan:2003} if one considers the inputs to
be read-only.

\begin{definition}
Let $f$ and $g$ be two polynomials of degree less than $n$, and $h$ be a
polynomial of degree less than $(n-1)$. The \emph{(low-order) half-additive full product} of $f$ and $g$ given $h$ is $\FPlo(f,g,h) = h + fg$.
Similarly, their \emph{high-order half-additive full product} is $\FPhi(f,g,h) = X^nh + fg$.
An \emph{in-place} half-additive full product algorithm is an algorithm computing a half-additive full product where $h$ is initially stored in the output space.
\end{definition}

This variant of the full product which has a partially-initialized
output space will be useful to derive other in-place algorithms.

\subsection{Multiplications as linear maps}\label{sec:linearmaps}

For ease of explanation, we will use the
linear property of polynomial multiplications when an operand is fixed.

Let $f=\sum_{i=0}^{n-1}f_iX^i$ and $g=\sum_{i=0}^{n-1}g_iX^i$ be two
size-$n$ polynomials.
If $f$ is fixed, the product
$h=fg$ can be described as a linear map
from $\K^n$ to $\K^{2n-1}$. The matrix, denoted \mat{\FP(f)}, for this
map is to a Toeplitz matrix built from the coefficients of $f$,
and the product $fg$ corresponds to the following matrix-vector product:
\begin{equation} \label{eq:fullmul}
\small
  \underbrace{
    \begin{pmatrix}
      f_0     &       &     \\
      \vdots  & \ddots &  \\
      f_{n-1} &  &f_0 \\
      &\ddots  &\vdots \\
      &  & f_{n-1} \\
    \end{pmatrix}
  }_{\mat{\FP(f)}}
  \times
  \underbrace{
    \begin{pmatrix}
      g_0 \\
      g_1 \\
      \vdots \\
      g_{n-1} \\
    \end{pmatrix}
  }_{\vec{g}}
  =
  \underbrace{
    \begin{pmatrix}
      h_0 \\
      \\
      h_1 \\
      \\
      \vdots \\
      \\
    h_{2n-1}
  \end{pmatrix}
}_{\vec{h}}
\end{equation}
where $\mat{\FP(f)} \in \K^{(2n-1) \times n}$, $\vec{g}\in\K^n$ and $\vec{h}\in\K^{2n-1}$.

The low and high short products being defined as part of the result of the full
product, their corresponding linear maps are endomorphisms of $\K^n$ and
$\K^{n-1}$ respectively, given by submatrices of \mat{\FP(f)} as follows:
\begin{equation} \label{eq:splow}
\small
\underbrace{
    \begin{pmatrix}
      f_0    & & & \\
      f_1    & \ddots & \\
      \vdots & \ddots &\ddots \\
      f_{n-1} & \hdots & f_1 &f_0 \\
    \end{pmatrix}
  }_{\mat{\SPlo(f)}}
  \quad
  \underbrace{
    \begin{pmatrix}
      f_{n-1} & \hdots & f_2 & f_1 \\
             &  \ddots       &  &   f_2\\
             & & \ddots  & \vdots \\
             &       &  & f_{n-1} \\
    \end{pmatrix}
  }_{\mat{\SPhi(f)}}
\end{equation}

Finally, the middle product corresponds also to a linear map from  $\K^{n}$  to
 $\K^{m}$  when the larger operand is fixed, given by the $m \times n$ Toeplitz
 matrix
\[\underbrace{
    \begin{pmatrix}
    f_{n-1} & f_{n-2} & \dots & f_1 & f_0\\
    f_n     & f_{n-1} &       & f_2 & f_1\\
    \vdots  & \vdots  &       &\vdots& \vdots \\
    f_{n+m-2}&f_{n+m-3}& \dots&f_{m-2}& f_{m-1}
    \end{pmatrix}
  }_{\mat{\MP(f)}}.
\]

\section{Time and space preserving reductions} \label{sec:tisp-preserving}

In this section, we compare the relative difficulties of the full
product, the half-additive full product, the low and high short
products, and the middle product, in the framework of time and space
efficient algorithms. To this end, we define a notion of \emph{time and
space preserving reduction} between problems.

We say that a problem $A$ is \TISP-reducible to a problem $B$ if, given an algorithm for $B$ that has time complexity $t(n)$ and space complexity $s(n)$, one can deduce an algorithm for $A$ that has time complexity $\bigO{t(n)}$ and space complexity $s(n)+\bigO 1$. We write $A\redto B$ is $A$ is \TISP-reducible to $B$ and $A\eqto B$ if both $A\redto B$ and $B\redto A$. Note that the \TISP-reduction is transitive.

The reduction we use can be defined using oracles and is an adaptation of the
notion of \emph{fine-grained reduction}~\cite[Definition 2.1]{Va18} adapted to
time-space fine-grained complexity classes~\cite{LiVaWaWi16}.

\begin{thm}
Half-additive full products and short products are equivalent under \TISP-reductions, that is
\[ \FPhi \eqto \FPlo\eqto \SPhi\eqto\SPlo.\]
Furthermore, if $\SP$ denotes either \SPlo or \SPhi,
\[ \FP\redto\SP \redto \MP. \]
\end{thm}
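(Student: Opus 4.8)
The plan is to establish a cycle of \TISP-reductions among the four "half-sized-output" variants, then bolt on the two extremes ($\FP$ on the small-output side, $\MP$ on the large-output side). The key observation throughout is that all of these maps are obtained from submatrices of the same Toeplitz matrix $\mat{\FP(f)}$, so a reduction amounts to exhibiting a small number of algebraic identities relating one product to shifted/reversed instances of another, each of which can be evaluated with only $\bigO 1$ extra algebraic registers beyond the output space handed to us by the oracle.

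First I would handle $\FPlo \eqto \SPlo$ and $\FPhi \eqto \SPhi$. One direction is trivial: a half-additive full product $\FPlo(f,g,h)=h+fg$ specializes to $\SPlo$ by taking $h$ to be (the low $n$ coefficients of) $fg$ — but more to the point, $\SPlo(f,g)=\FPlo(f,g,0)$, so $\SPlo\redto\FPlo$ with the oracle's output space zero-initialized. Conversely, to compute $h+fg \bmod X^n$ in place given an $\SPlo$-oracle and $h$ already sitting in \out, I would call the oracle to get $fg\bmod X^n$ into a scratch view — except there is no scratch view, so instead I use the standard trick of splitting $f=f_{lo}+X^{n/2}f_{hi}$ etc.\ and recursing, or, more simply, observe that writing $fg \bmod X^n$ over $h$ and then adding is not allowed since the oracle overwrites; the clean way is the identity $\FPlo(f,g,h) = \SPlo(f,g) + h$ computed by first reading $h$ out coefficient-by-coefficient while the oracle has not yet touched that cell — this needs care and is essentially the content of the easy lemmas the paper defers to \cref{sec:tisp-preserving}. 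The reversal identity $\SPhi(f,g) = \rev(\SPlo(\rev f,\rev g))$ (up to the exact indexing in \eqref{eq:splow}) gives $\SPhi\eqto\SPlo$, since reversing inputs/outputs is free; similarly $\FPhi\eqto\FPlo$.

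For $\FP\redto\SP$: a balanced full product $fg$ of size $2n-1$ decomposes as $fg = \SPlo(f,g) + X^n\,\SPhi(f,g)$, and the two halves of the output space are disjoint, so I would call the $\SP$-oracle twice, once for each half, using each half of \out as its working space; this costs $2t(n)+\bigO 1$ time and $s(n)+\bigO 1$ space. (If only one of $\SPlo,\SPhi$ is available, use the reversal equivalence from the first part.) For $\SP\redto\MP$: here I would write a short product in terms of a middle product by zero-padding one operand so that the "central" coefficients of the padded product are exactly the $n$ coefficients wanted — concretely, $\SPlo(f,g) = \MP(\tilde f, g)$ where $\tilde f$ is $f$ prepended with $n-1$ zero coefficients (reading off \cref{def:mp} with the roles chosen so that $\quo X^{n-1}$ then $\bmod X^n$ extracts the low product), and the zero padding is virtual (just an index offset into a read-only region, costing $\bigO 1$ standard registers).

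The main obstacle I expect is not any single identity but the bookkeeping that makes each reduction genuinely in-place: since the oracle is permitted to use its output cells as scratch, I cannot assume a value I wrote there survives a later oracle call, and I cannot assume input cells are writable. So for $\FP\redto\SP$ the two oracle calls must target truly disjoint regions (fine here), and for the half-additive reductions the pre-loaded $h$ must be consumed before the oracle clobbers it — this forces a specific evaluation order and is exactly why these are stated as a theorem rather than a triviality. Verifying that the virtual zero-padding in $\SP\redto\MP$ only inflates the standard-register count, never the algebraic work space, is the other point needing a line of justification.
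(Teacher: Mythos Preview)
Your treatment of the reversal equivalences $\SPhi\eqto\SPlo$ and $\FPhi\eqto\FPlo$, of $\FP\redto\SP$ via $fg=\SPlo(f,g)+X^n\SPhi(f,g)$, and of $\SP\redto\MP$ via virtual zero-padding all match the paper's proof. The gap is in $\SP\eqto\FP^+$, and it stems from a misreading of the definition of $\FPlo$.

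The half-additive full product $\FPlo(f,g,h)$ is the \emph{full} product $fg+h$, a size-$(2n-1)$ object; the adjective ``low-order'' refers to where the pre-loaded $h$ sits in the output space, not to any truncation of the result. Consequently your identity ``$\SPlo(f,g)=\FPlo(f,g,0)$'' is false (the right-hand side is $fg$, not $fg\bmod X^n$), and even the intended reduction does not go through: to call an $\FPlo$-oracle you must hand it $2n-1$ output registers, while an $\SPlo$ computation has only $n$, so you are $n-1$ algebraic registers short, which is not $\bigO 1$. For the converse direction you write ``compute $h+fg\bmod X^n$'', which reflects the same misreading, and your fallback of reading $h$ out ``coefficient-by-coefficient while the oracle has not yet touched that cell'' assumes a processing order the oracle is not required to respect.

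The paper's actual arguments exploit precisely the size asymmetry you overlooked. For $\FP^+\redto\SP$: the $\FPlo$ output space has $2n-1$ cells with $h$ occupying only the low $n-1$, so the high $n$ cells are free; compute $\SPlo(f,g)$ there, add its low $n-1$ coefficients into $h$, move the remaining coefficient down, then compute $\SPhi(f,g)$ into the now-free high $n-1$ cells. For $\SP\redto\FP^+$: with only $n$ output cells you cannot call a size-$n$ full product, so halve the operands, $f=f_0+X^{\lceil n/2\rceil}f_1$ and $g=g_0+X^{\lceil n/2\rceil}g_1$, and assemble $\SPlo(f,g)$ from three \emph{half-size} full products $f_0^-g_1$, $f_1g_0^-$, $f_0g_0$, each of whose outputs fits in the $n$ available cells, with the last two made half-additive so the accumulation happens in place. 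No recursion is needed; a fixed depth of three oracle calls suffices.
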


\begin{proof}
The equivalences $\SPhi\eqto\SPlo$ and $\FPhi\eqto\FPlo$ are proved
below in Lemmas~\ref{lem:shortproducts} and \ref{lem:fullproducts}.
The equivalence $\SP\equiv\FP^+$ (where $\SP$
denotes any of \SPlo and \SPhi, and $\FP^+$ any of \FPlo and \FPhi) is
proved in \cref{sec:equivSPFP}.

The reduction $\FP\redto\SP$ simply amounts to the identity
$\FP(f,g) = \SPlo(f,g) + X^n \SPhi(f,g)$.
The reductions $\SP\redto\MP$ and $\FP\redto\MP$
follow from the following equalities where $0$ denotes the zero
polynomial stored in size $n$:
\begin{align*}
\SPlo(f,g) &= \MP(0 + X^n f, g)\text{,}\\
\SPhi(f,g) &= \MP(f+X^n 0, g)\text{, and}\\
\FP(f, g) &= \MP(0 + X^n f + X^{2n} 0, g).
\end{align*}
Hence, one can compute the full product, the low and high short products of
$f$ and $g$ simply by calling a middle product algorithm on $f$ padded with
zeroes and $g$. In our model of read-only inputs, an actual padding is not
required. It is sufficient to use some kind of \emph{fake padding} where the
data structure storing $f$ is responsible for returning $0$ when needed.
\end{proof}

The relative order of difficulty $\FP\redto\SP\redto\MP$ makes intuitive
sense based on the size of the output compared to the size of the inputs since the
output can be used as work space: The full product maps $2n$ coefficients to
$2n-1$ coefficients, the short products map $2n$ coefficients to $n$
coefficients and the middle product maps $3n$ coefficients to $n$
coefficients. In Section~\ref{sec:out2in}, we shall give a partial converse to
$\SP\redto\MP$: There exists a reduction from $\SP$ to $\MP$ which preserves
space and either maintains the asymptotic complexity or increases it by
a logarithmic factor.

\subsection{Equivalences based on reverse polynomials} \label{sec:equiv_rev}

\begin{definition}
The \emph{size-$n$ reversal} of a polynomial $f$ is $\rev_n(f) = X^{n-1} f(1/X)$.
\end{definition}
We note that any algorithm whose input is a size-$n$ polymial $f$
can be turned into a new algorithm that computes the same function with
input $\rev_n(f)$, simply by replacing a query to any coefficient with
index $i$ with one of index $n-i$, not affecting the number of ring
operations.

Let us now prove that $\SPhi\eqto\SPlo$.
\begin{lem}\label{lem:shortproducts}
Let $f$ and $g$ be two size-$n$ polynomials. Then
\[ \SPhi(f, g) = \rev_{n-1}\left(\SPlo(\rev_{n-1}(f\quo X), \rev_{n-1}(g\quo X))\right).\]
\end{lem}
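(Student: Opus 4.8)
The plan is to reduce the identity to a direct manipulation of coefficients. Write $N = n-1$ and let $\tilde f = \rev_{N}(f \quo X)$ and $\tilde g = \rev_{N}(g \quo X)$, both size-$N$ polynomials. By definition, if $f = \sum_{i=0}^{n-1} f_i X^i$, then $f \quo X = \sum_{i=0}^{N-1} f_{i+1} X^i$, so $\tilde f = \sum_{i=0}^{N-1} f_{N-i} X^i$; in other words the coefficient of $X^i$ in $\tilde f$ is $f_{n-1-i}$, and likewise for $\tilde g$. So the right-hand side discards $f_0$ and $g_0$ exactly as $\SPhi$ does on the left (recall the remark after the first definition that $\SPhi$ does not depend on all coefficients), which is a good consistency check.

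Next I would expand both sides in terms of the $f_i$ and $g_j$. On the left, $\SPhi(f,g) = (fg)\quo X^n$, whose coefficient of $X^k$ (for $0 \le k \le N-1$) is $\sum_{i+j = n+k} f_i g_j$. On the right, first compute $\SPlo(\tilde f, \tilde g) = (\tilde f \tilde g) \bmod X^N$: its coefficient of $X^\ell$ (for $0 \le \ell \le N-1$) is $\sum_{a+b=\ell} \tilde f_a \tilde g_b = \sum_{a+b=\ell} f_{n-1-a} g_{n-1-b}$. Applying $\rev_{N}$ sends the coefficient of $X^\ell$ to the coefficient of $X^{N-1-\ell}$, so the coefficient of $X^k$ in the right-hand side is $\sum_{a+b = N-1-k} f_{n-1-a} g_{n-1-b}$. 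Setting $i = n-1-a$ and $j = n-1-b$ turns the constraint $a+b = N-1-k$ into $i+j = 2(n-1) - (N-1-k) = n+k$ (using $N = n-1$), which is exactly the index set appearing on the left. Hence the two coefficient sums agree term by term.

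I should be careful about the ranges of summation so that no spurious terms are included on either side. On the left, the sum $\sum_{i+j=n+k} f_i g_j$ runs over $0 \le i,j \le n-1$, i.e.\ $\max(0, n+k-(n-1)) \le i \le \min(n-1, n+k)$, which is $k+1 \le i \le n-1$. On the right, after the substitution, $0 \le a, b \le N-1$ becomes $0 \le n-1-i \le n-2$ and similarly for $j$, i.e.\ $1 \le i \le n-1$ together with $i + j = n+k$ forcing $i \ge k+1$; so again $k+1 \le i \le n-1$. The index sets match, so the identity holds coefficient-wise, completing the proof. The only mildly delicate point — and the one I would write out most carefully — is this bookkeeping with the reversal indices and the resulting shift $i+j = n+k$, since an off-by-one in $N$ versus $n$ is easy to make; everything else is a routine substitution.
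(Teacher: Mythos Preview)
Your proof is correct and follows essentially the same approach as the paper: both compute $\SPlo(\tilde f,\tilde g)$ in terms of the original coefficients $f_{n-1-i}g_{n-1-j}$, apply $\rev_{n-1}$, and then perform the index substitution $i\mapsto n-1-i$, $j\mapsto n-1-j$ to recover $\SPhi(f,g)$. Your extra care with the index ranges is welcome but does not change the argument.
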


\begin{proof}
Let $\tilde f = \rev_{n-1}(f\quo X)$ and $\tilde g = \rev_{n-1}(g\quo X)$. Then
\[\SPlo(\tilde f, \tilde g) = \sum_{\substack{0\le i,j < n-1\\i+j < n-1}} f_{n-1-i}g_{n-1-j} X^{i+j},\]
whence
\[\rev_{n-1}\left(\SPlo(\tilde f, \tilde g)\right) = \sum_{\substack{0\le i,j < n-1\\i+j < n-1}} f_{n-1-i} g_{n-1-j} X^{n-2-(i+j)}.\]
One can change the indices of summation using $k = n-1-i$ and $\ell = n-1-j$. Then $n-2-(i+j) = k+\ell-n$ and the indices $i$ and $j$ such that $0\le i+j < n-1$ are mapped to indices $k$ and $\ell$ such that $2n-1 > k+\ell \ge n$. In other words,
\[\rev_{n-1}\left(\SPlo(\tilde f, \tilde g)\right) = \sum_{\substack{0< k,\ell\le n-1\\n\le k+\ell < 2n-1}} f_k g_\ell X^{k+\ell-n} = \SPhi(f,g).\qedhere\]
\end{proof}

Similarly, we can prove that $\FPhi\eqto\FPlo$.
\begin{lem}\label{lem:fullproducts}
Let $f$ and $g$ be two size-$n$ polynomials and $h$ be a size-$(n-1)$ polynomial. Then
\[\FPhi(f,g,h) = \rev_{2n-1}\left(\FPlo(\rev_n(f),\rev_n(g),\rev_{n-1}(h))\right).\]
\end{lem}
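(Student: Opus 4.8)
The plan is to mirror the structure of the proof of \cref{lem:shortproducts}, since $\FPhi$ and $\FPlo$ differ from $\SPhi$ and $\SPlo$ only by the presence of the additive term $h$. First I would recall the identity $fg = \SPlo(f,g) + X^n\SPhi(f,g)$, together with the definitions $\FPlo(f,g,h) = h + fg$ and $\FPhi(f,g,h) = X^n h + fg$. The key observation is that reversal interacts nicely with both the product and the truncation: for size-$n$ polynomials one has $\rev_n(f)\rev_n(g) = \rev_{2n-1}(fg)$, and more generally, reversing a sum of two polynomials of prescribed sizes distributes correctly once the size parameters are chosen consistently. I would therefore compute $\rev_n(f)\rev_n(g) = \rev_{2n-1}(fg)$ directly, and separately $\rev_{n-1}(h)$ regarded as a size-$(n-1)$ polynomial, which contributes a term $X^{n}\rev_{n-1}(h)$ when padded to size $2n-1$ on the low-order side (because the low short product $\FPlo$ adds $h$ into the bottom $n-1$ coefficients).

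Concretely, I would argue as follows. Write $u = \rev_n(f)$, $v = \rev_n(g)$, $w = \rev_{n-1}(h)$. Then $\FPlo(u,v,w) = w + uv$, a polynomial of size $2n-1$ (degree $\le 2n-2$). Applying $\rev_{2n-1}$ and using linearity of $\rev_{2n-1}$ over polynomials viewed in size $2n-1$, we get
\[
\rev_{2n-1}(\FPlo(u,v,w)) = \rev_{2n-1}(w) + \rev_{2n-1}(uv).
\]
For the product term, $\rev_{2n-1}(uv) = \rev_{2n-1}(\rev_n(f)\rev_n(g)) = \rev_{2n-1}(\rev_{2n-1}(fg)) = fg$. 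For the additive term, $w$ has degree $\le n-2$, so as an element of $\K[X]$ of formal size $2n-1$ its reversal is $X^{2n-1-1-\deg}$-shifted; carefully, $\rev_{2n-1}(w) = X^{2n-1-(n-1)}\rev_{n-1}(w) = X^{n}\rev_{n-1}(\rev_{n-1}(h)) = X^n h$. Adding the two contributions yields $X^n h + fg = \FPhi(f,g,h)$, as claimed.

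The main obstacle, and the only place where care is genuinely needed, is the bookkeeping of the size parameters in the reversal of the additive term: $h$ is a size-$(n-1)$ polynomial but appears inside a size-$(2n-1)$ reversal, so one must track the zero-padding explicitly and verify that the shift is exactly $X^n$ rather than $X^{n-1}$ or $X^{n+1}$. This is precisely the analogue of the index substitution $k = n-1-i$, $\ell = n-1-j$ in \cref{lem:shortproducts}, and the cleanest way to handle it is the general fact that for a polynomial $p$ of degree less than $d$ and any $D \ge d$, $\rev_D(p) = X^{D-d}\rev_d(p)$; applied with $d = n-1$, $D = 2n-1$ this gives the factor $X^{n}$ immediately. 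With that lemma in hand the proof is a two-line computation. I would also note, as the excerpt does after the reversal remarks, that this equivalence is purely a rewriting of coefficient indices and introduces no new ring operations, so it is indeed a \TISP-reduction in both directions.
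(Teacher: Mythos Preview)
Your proof is correct and follows essentially the same route as the paper's: introduce shorthand for the reversed operands, use linearity of $\rev_{2n-1}$ to split $\rev_{2n-1}(w+uv)$ into two pieces, then invoke $\rev_n(f)\rev_n(g)=\rev_{2n-1}(fg)$ for the product term and the padding identity $\rev_{2n-1}(w)=X^{n}\rev_{n-1}(w)=X^n h$ for the additive term. The paper's proof is just a terser version of exactly this computation; your explicit statement of the general fact $\rev_D(p)=X^{D-d}\rev_d(p)$ is a helpful addition but not a different idea.
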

\begin{proof}
  Let $f^* = \rev_n(f)$, $g^*=\rev_n(g)$ and $h^*=\rev_{n-1}(h)$.  First note
  that $\rev_{2n-1}(h^*) = X^n h$ by definition.  Since
  $\rev_{2n-1}(f^* g^*) = \rev_n(f^*) \rev_n(g^*)$ we get that
  $\rev_{2n-1}(f^*g^* + h^*) =\rev_n(f^*) \rev_n(g^*) + \rev_{2n-1}(h^*) = fg +
  X^nh= \FPhi(f,g,h)$.
\end{proof}

\subsection{Equivalence between short products and half-additive full products} \label{sec:equivSPFP}

\paragraph{Reduction from $\SP$ to $\FP^{+}$ }
Let $f$ and $g$ be two size-$n$ polynomials and $h$ be a size-$(n-1)$
polynomial. The half-additive full product
$\FPlo(f,g,h)$ equals $fg + h$. Note that $fg = \SPlo(f,g) + X^n
\SPhi(f,g)$. This already proves that the non-additive full product can be
computed using algorithms for low and high short products. For the half-additive
full products, it is sufficient to store an intermediate result in the free
registers of the output space.

Assuming  $\out_{[0.. n-1[}$ holds the value of $h$, the following instructions
reduces the computation of $\FPlo(f,g,h)$ to two short products plus $(n-1)$ additions.
\begin{algorithmic}[1]
\State $\out_{[n-1.. 2n-1[} \gets\algo{SP_{lo}}(f,g)$
\State $\out_{[0.. n-1[} \gets \out_{[0..n-1[} + \out_{[n-1.. 2n-2[}$
\State $\out_{n-1} \gets \out_{2n-1}$
\State $\out_{[n.. 2n-1[}\gets \algo{SP_{hi}}(f,g)$
\end{algorithmic}
\paragraph{Reduction from $\FP^{+}$ to $\SP$}
Let $f$ and $g$ be polynomials of degree less than $n$.
Splitting $f$ and $g$ by half such that
$f = f_0 + X^{\lceil n/2\rceil } f_1$ and $g = g_0 + X^{\lceil n/2\rceil} g_1$,
we have
\[
  \SPlo(f, g)= f_0 g_0 + X^{\lceil n/2\rceil} (f_0g_1+f_1g_0) \bmod
  X^n.
\]
 What is needed is the full product of $f_0$ and $g_0$, and the low short
products of $f_0$ and $g_1$, and $f_1$ and $g_0$. Actually, since $f_0$ is
larger than $g_1$ when $n$ is odd (and $g_0$ larger than $f_1$), one only needs
the short products $\SPlo(f_0^-,g_1)$ and $\SP(f_1, g_0^-)$ where
$f_0^-=f\bmod X^{\lfloor n/2\rfloor}$ and
$g_0^- = g\bmod X^{\lfloor n/2\rfloor}$.

To avoid any recursive call that would imply storing a call stack, we can
actually use full products instead of short products: We first compute $f_0^-g_1 +f_1g_0^-$ using a
full product and a half-additive full product. Then we can forget about the
higher order terms, and add $f_0g_0$ to this sum using a second half-additive
full product. The following instructions summarize this approach:

\begin{algorithmic}[1]
\State $\out_{[0..2\floor{n/2}-1[}\gets \algo{FP}(f_0^-, g_1)$ \Comment half-additivity not needed
\State $\out_{[0..2\floor{n/2}-1[}\gets \algo{FP^+_{lo}}(f_1,g_0^-)$ \Comment
erase higher part of $f_0^- g_1$
\State $\out_{[\ceil{n/2}..n[}\gets \out_{[0..\floor{n/2}[}$ \Comment keep lower
part of $f_0^-g_1 + f_1g_0^-$
\State $\out_{[0..2\ceil{n/2}-1[}\gets \algo{FP_{hi}^+}(f_0,g_0)$
\end{algorithmic}

The correctness is clear. The complexity of the algorithm is the cost of three
full products in degree approximately $n/2$: One non-additive full product in
size $\lfloor n/2\rfloor$ and two half-additive full products in size
$\lfloor n/2\rfloor$ and $\lceil n/2\rceil$, respectively.

\medskip
As direct consequence of Lemmas~\ref{lem:shortproducts}
and~\ref{lem:fullproducts}, one obtains the same reductions to \SPhi and
from \FPlo or \FPhi.

\subsection{From half-additive full product to unbalanced full product} \label{sec:FP2FPu}

The unbalanced full product can be computed using any algorithm for the
(balanced) full product. Nevertheless, the space complexity increases since
intermediate results must be stored. Given an algorithm for the balanced full
product of space complexity $s(n)$, one obtains an algorithm with space
complexity $s(n)+(n-1)$ for the unbalanced full product. In this section, we
prove that if the original full product algorithm is \emph{half-additive}, the
resulting unbalanced full product algorithm has the same space complexity.

Let $f$ be a size-$m$ polynomial and $g$ be a size-$n$ polynomial
with $m > n$. Write $f = \sum_{k=0}^{\ceil{m/n}-1} X^{kn} f_k$, where
each sub-polynomial
$f_0$, \dots, $f_{\ceil{m/n}-1}$ has size at most $n$. The computation of 
$f\cdot g$ reduces to the computations of each $f_k\cdot g$. The following
instructions prove that using half-additivity, the intermediate results 
$f_k\cdot g$ can be computed directly in the output space.

\begin{algorithmic}[1]
    \State $\out_{[\lceil m/n\rceil n..m+n[} \gets \algo{FP}(f_{\lceil
    m/n\rceil}, g)$ \Comment using fake padding
    \For{$k$ from $\ceil{m/n}-1$ down to $0$}
        \State $\out_{[kn.. (k+2)n-1[}\gets \algo{FP^+_{hi}}(f_k,g)$ 
    \EndFor
\end{algorithmic}

Note that at step 1, the polynomial computed may have a larger size that what is needed, due to padding. Yet one can use without difficulty the lower part of the output space to store these additional useless coefficients, that are then erased at step 3.

The time complexity remains $\lceil m/n\rceil\M(n)$ where $\M(n)$ is the complexity of the half-additive full product.

\section{In-place algorithms from out-of-place algorithms} \label{sec:out2in}

In this section, we show how to obtain in-place algorithms from out-of-place algorithms. The theorem below summarizes the main results described in this section.

\begin{thm}\label{thm:ifromo}
\begin{enumerate}
\item Given a full product algorithm with time complexity $\M(n)$ and space complexity $\le cn$, one can build an in-place algorithm for the half-additive full product with time complexity $\le (2c+7)\M(n) + o(\M(n))$.
\item Given a (low or high) short product algorithm with time complexity $\M(n)$ and space complexity $\le cn$, one can build an in-place algorithm for the same problem with time complexity $\le (2c+5)\M(n) + o(\M(n))$.
\item Given a middle product algorithm with time complexity $\M(n)$ and space complexity $\le cn$, one can build an in-place algorithm for the same problem with time complexity $\le \M(n)\log_{\frac{c+1}{c+2}}(n) + \bigO{\M(n)}$ if $\M(n)$ is quasi-linear, and $\bigO{\M(n)}$ otherwise.
\end{enumerate}
\end{thm}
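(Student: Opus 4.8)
All three parts run on one engine: beyond the $\bigO1$ work registers, the only scratch available is the output array \out\ itself, so every call to the given $\bigO n$-space subroutine must be placed on a sub-instance small enough that its $cn$ scratch demand fits inside the portion of \out\ not yet pinned to a final value. I would prove each part by exhibiting a decomposition of the target product into such calls, together with a schedule of \out\ that always keeps a sufficiently large free window.

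\emph{Parts 1 and 2.} The plan is a \emph{constant}-depth recursion. By the half-size split of \cref{sec:equivSPFP} (and its high-order mirror via \cref{lem:shortproducts,lem:fullproducts}), a size-$n$ short product --- or, for Part~1, the half-additive full product $h+fg$ with $h$ pre-stored in \out\ --- reduces to a bounded number of products of size $\approx n/2$, each computed out-of-place into a free window of \out\ and accumulated in place; half-additivity is exactly what lets these partial products be added straight into \out\ with no auxiliary accumulator. Because the destination of a size-$s$ piece occupies only $s$ cells, a constant fraction of \out\ is always free for scratch, so after $\bigO{\log c}$ levels (a constant, since $c$ is fixed) the pieces shrink to size $\Theta(n/c)$, at which point $c\cdot\Theta(n/c)$ scratch fits and the given out-of-place algorithm is invoked directly. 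Unwinding leaves $\mathrm{poly}(c)$ calls of size $\Theta(n/c)$ plus $\bigO n$ additions, giving a running time of the form $(2c+\bigO1)\M(n)+o(\M(n))$; in Part~1 the cells that briefly held the pre-stored $h$ are restored by one final $\bigO n$-operation pass, which is in essence the $+2$ separating $2c+7$ from $2c+5$. Pinning down the exact constants and how the pieces tile the (possibly truncated) product is the routine part.

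\emph{Part 3.} Here the plan is a recursion that peels a fixed fraction off the output. Suppose the current task is a middle product whose output sits in a block of $s$ still-free cells of \out\ (initially $s=n$, taking the balanced case $m=n$ for concreteness), while the fixed operand $g$ keeps its full size. Set $q=\floor{s/(c+2)}$ and cut $g$ into $\ceil{n/q}$ blocks of size $q$; then the top $q$ coefficients of the answer are the sum of $\ceil{n/q}$ \emph{balanced} middle products of output size $q$, each of which the given algorithm evaluates using $cq$ scratch cells. Scheduling these calls so that $q$ cells of the block hold the running sum, another $q$ hold the current sub-product, and $cq$ serve as scratch uses exactly $(c+2)q\le s$ cells --- which is exactly why the fraction must be $1/(c+2)$. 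Once those $q$ coefficients are final, the bottom $s-q\approx\tfrac{c+1}{c+2}s$ cells are free again, and one recurses on them, stopping when $s=\bigO c$, where the schoolbook algorithm finishes in $\bigO1$ space. The recursion has depth $\log_{(c+2)/(c+1)}(n)=\bigO{\log n}$, and a level with output size $s$ costs $\ceil{n/q}\M(q)\approx(c+2)\tfrac ns\M(\tfrac s{c+2})$ plus $\bigO n$ additions; for quasi-linear $\M$ this per-level bound is $\le\M(n)+\bigO n$ and stays $\Theta(\M(n))$ for a constant fraction of the levels, so the total is $\Theta(\M(n)\log n)$, whereas for super-linear $\M$ it decays geometrically in the depth and the total telescopes to $\bigO{\M(n)}$ --- precisely the dichotomy of the statement, the logarithm's base being $\tfrac{c+1}{c+2}$.

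The hard part throughout is the space bookkeeping: one must maintain an invariant recording which sub-window of \out\ is finalised, which is still free, and which currently holds an incomplete sum, and then verify that no call's $cn$ scratch ever overruns the free region and that the boundary coefficients --- those where the block decomposition does not line up with the $q$-coefficient windows --- are absorbed without an extra register. This is tightest in Part~3, where \out\ is wholly committed to the $n$-coefficient answer, so the $c+2$ accounting above has no room to spare and every off-by-one in the coefficient ranges must be tracked through the whole recursion.
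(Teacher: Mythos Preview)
Your Part~3 is essentially the paper's argument: peel off $k=\lfloor m/(c+2)\rfloor$ output coefficients using the out-of-place routine on size-$k$ subproblems, recurse on the remaining $m-k$, and observe that the per-level cost stays $\Theta(\M(n))$ for quasi-linear $\M$ (giving the $\log_{(c+2)/(c+1)}n$ factor) but decays geometrically otherwise.

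Your Parts~1 and~2, however, have a real gap. The constant-depth half-split plan does not work, for exactly the reason you identify as the hard part: the space bookkeeping fails. The reduction in \cref{sec:equivSPFP} computes a size-$n$ short product via three size-$n/2$ full products whose \emph{outputs} each have size $n-1$, i.e.\ they already fill essentially all of \out; there is no constant-fraction window left over for the $c\cdot n/2$ scratch. More generally, under a half-split the free-space-to-subproblem-size ratio is invariant from level to level, so after $\bigO{\log c}$ levels a size-$n/c$ leaf still sees only $\Theta(n/c)$ free cells, not the $c\cdot n/c=\Theta(n)$ it needs. No constant number of halvings makes the out-of-place call fit.

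The paper's fix is to apply to Parts~1 and~2 the \emph{same} fractional-peel idea you used for Part~3. For the short product one takes $k=\lfloor n/(c+2)\rfloor$, tiles the bottom $k$ rows of $\mat{\SPlo(f)}$ by $2\lceil n/k\rceil-1\le 2c+5$ size-$k$ short products (each fitting in the $n-k\ge(c+1)k$ free cells), and then recurses on the size-$(n-k)$ short product $\SPlo(f\bmod X^{n-k},g\bmod X^{n-k})$. For the half-additive full product one takes $k=\lfloor(n+1)/(c+3)\rfloor$, handles the two ``borders'' $f_0g$ and $\hat f g_0$ via unbalanced additive products in size $k$, and recurses on $\hat f\hat g$ in size $n-k$. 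The recursion depth is $\Theta(\log n)$, \emph{not} $\bigO{1}$; what kills the $\log n$ factor here (and not in Part~3) is that both operands shrink, so the level costs $\M(n_i)$ with $n_i\approx\alpha^i n$ sum geometrically to $\M(n)/(1-\alpha)$ by the superadditivity of $\M$. The constants $2c+7$ and $2c+5$ come from the counts $2\lceil n/k\rceil-1$ with the two choices of $k$, not from a restoration pass on $h$.
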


Actually, our reductions work for any space bound $s(n)\le\bigO n$. Smaller space bounds yield better time bounds though we do not have a general expression in terms of $s(n)$. Yet sublinear space bounds still imply an increase of the time complexity by a multiplicative constant for full and short products.

Formally, we give self-reductions for the three problems. That is, we use an out-of-place algorithm for the problem as building block of our in-place version. The general idea is similar in the three cases. In a first step, we use the out-of-place algorithm to compute some part of the output, using the unused output space as temporary work space. Then a recursive call finishes the work. The (constant) amount of space needed in our in-place algorithms correspond the space needed to process the base cases.

Using the language of linear algebra, we aim to apply some specific matrix to a vector. The general construction we use consists in first applying the top or bottom rows of the matrix to the vector using the out-of-place algorithm, and applying the remaining rows using a recursive call (\emph{cf.} Fig.~\ref{fig:decompositions}). In the cases of full and short products, the diamond and triangular shapes of the corresponding matrices imply that the recursive call is made on two smaller inputs: For instance, to apply the first rows of a triangular matrix to a vector, one only needs to apply it to the first entries of the vector. For the middle product, the square shape imply that one input remains of the same size in the recursive call. This difference explains the difference in the time complexities in Theorem~\ref{thm:ifromo}.

\begin{figure*}[t]
\svg[\textwidth]{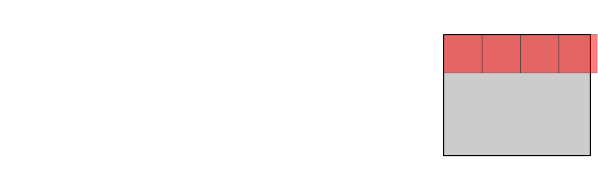}
\caption{Tilings of the matrices $\mat{\FP(f)}$ (left), $\mat{\SPlo(f)}$ (center) and $\mat{\MP(f)}$ (right).}
\label{fig:decompositions}
\end{figure*}

\subsection{In-place full product algorithm} \label{sec:oFP2iFP}

Our aim is to build an in-place (low-order) half-additive full product algorithm \iFPhi
based on an out-of-place full product algorithm \oFP that has space complexity $cn$. 
That is, we are given two polynomials $f$ and $g$ of degree $<n$
in the input space and a polynomial $h$ of degree $<n-1$ in the $(n-1)$
low-order registers of the output space $\out$ and we aim to compute $fg + h$ in
$\out$. The algorithm is based on the tiling of the matrix $\mat{\FP(f)}$
given in Fig.~\ref{fig:decompositions} (left). 

For some $k < n$
to be fixed later, let $f= \hat{f}X^k+f_0$ and $g= \hat{g}X^k+g_0$
where $\deg f_0, \deg g_0 <k$. Then we have
\begin{equation}\label{eq:recursiveIFP}
  h+fg = h+f_0g+\hat{f}g_0X^k+ \hat{f}\hat{g}X^{2k}.
\end{equation}
Recall that the output
\out~has size $2n-1$ with its $n-1$ lowest registers containing $h$.
Then equation \eqref{eq:recursiveIFP} can be evaluated with the following three
steps:
\begin{algorithmic}[1]
\State  $ \out_{[0..n+k-1[} \gets h + f_0g $
\State  $ \out_{[k..n+k-1[} \gets \out_{[k..n+k-1[} + \hat{f}g_0 $
\State $\out_{[2k..2n[}  \gets \out_{[2k..2n[} + \hat{f}\hat{g}$
\end{algorithmic}
The first two steps corresponds exactly to two \emph{additive} unbalanced full 
products, that is unbalanced full products that must be added to some already
filled output space. One can describe an algorithm \oFPunbalhA for this task, 
based on a (standard) full product algorithm \oFP: If $f$ has degree $< k$ and
$g$ has degree $< n$, $n > k$, we write $g = \sum_{i=0}^{\ceil{n/k}-1} g_i 
X^{ki}$ with $\deg(g_i) < k$. Then $fg = \sum_i fg_i$: The algorithm computes
the $\ceil{n/k}$ products $fg_i$ in $2k-1$ extra registers and adds them
to the output. If \oFP has time complexity $\M(n)$ and space complexity $cn$,
the time complexity of \oFPunbalhA is $\ceil{n/k}(\M(k) + 2k-1)$ and its space
complexity $(c+2)k-1$. 

The last step computes $h+fg$ and
corresponds to a half-additive full product on inputs of degree $<n-k$, since
only the $n-k-1$ first registers of $\out_{[2k..2n[}$ are filled: Indeed,
$ \deg (h + f_0g +\hat{f}g_0X^k)< n+k-1$. This last step is thus a recursive
call.

In order to make this algorithm run in place, $k$ must be chosen so that the
extra memory needed in the two calls to \oFPunbalhA fits exactly in the unused
part of \out. This is the case when
\[
  (c+2)k-1 \leq 2n-1 - (n+k-1)
\]
which gives $k\leq \frac{n+1}{c+3}$. The resulting algorithm is formally depicted below.

\begin{algorithm}
\caption{\reduction{\iFPhi}{\oFP}}
\begin{algorithmic}[1]
\Input $f$ and $g$ of degree $< n$ in the input space, $h$ of degree $<n-1$ in the output space $\out$
\Output $\out$ contains $fg+h$
\BaseAlg A full product algorithm \oFP with space complexity $\leq c n$

    \If{$n < c+2$}
        \State $\out\gets \out + fg$ \Comment using a naive algorithm
    \Else
    \State $k\gets\floor{(n+1)/(c+3)}$
    \State $\out_{[0..n+k-1[}\gets \oFPunbalhA(h,f_0,g)$ \Comment work space: $\out_{[n+k-1..2n[}$ %
    \State $\out_{[k..n+k-1[}\gets \oFPunbalhA(h+f_0g,f,g_0)$  \Comment same work space 
    \State $\out_{[2k.. 2n[}\gets\reduction{\iFPhi}{\oFP}(f\quo X^k, g\quo X^k)$
    \EndIf
\end{algorithmic}
\end{algorithm}

\paragraph{Complexity analysis} The algorithm uses two calls to \oFPunbalhA
with inputs of sizes $(k,n)$ and $(n-k,k)$ respectively. The total complexity
amounts to
$\ceil{n/k} \M(k) + (\ceil{n/k]}-1)\M(k) + 2(\ceil{n/k}-1)(2k-1)$ plus a recursive call in size $n-k$. Let $T(n)$ be the complexity of
\iFPhi, we thus have
\[
  T(n) = T(n-k) + (2\lceil n/k \rceil -1) \left[\M(k)+(2k-1)\right] .
\]
Note that $k$ depends upon $n$, this implies that the analysis must be
done without $k$.  Since
$k=\lfloor (n+1)/(c+3)\rfloor$, $ \lceil n/k\rceil \leq c+4$
for $n \geq (c+2)(c+4)$.
Therefore,
\[  T(n) \leq T\left(\frac{c+2}{c+3}(n+1)\right) + (2c+7)\left[
    \M\left(\frac{n+1}{c+3}\right) + 2\frac{n}{c+3}- \frac{c+1}{c+3}\right].
\]
Using Corollary~\ref{cor:recurrence}, we conclude that $T(n) \le (2c+7)\M(n) + o(\M(n))$.

\subsection{In-place short product algorithm} \label{sec:oSP2iSP}

Our goal is to describe an in-place (low) short product algorithm based on an out-of-place one, based on the tiling of $\mat{\SPlo(f)}$ depicted on Fig.~\ref{fig:decompositions} (center). Let $f=\sum_{i=0}^{n-1} f_i X^i$ and $g = \sum_{i=0}^{n-1} g_i X^i$, and let $h = \sum_{i=0}^{n-1} h_i X^i = \SPlo(f,g)$. The idea is to fix some $k < n$ and to have two phases. The first phase corresponds to the bottom $k$ rows of $\mat{\SPlo(f)}$ and computes $h_{n-k}$ to $h_{n-1}$ using the out-of-place algorithm on smaller polynomials. The second phase corresponds to the top $(n-k)$ rows and is a recursive call to compute $h_0$ to $h_{n-k-1}$: Indeed, $h\bmod X^{n-k} = \SPlo(f\bmod X^{n-k}, g\bmod X^{n-k})$.

For the second phase, we remark that the bottom $k$ rows can be tiled by $\ceil{n/k}$ lower triangular matrices (denoted $L_0$, \dots, $L_{\ceil{n/k}-1}$ from the right to the left), and $\ceil{n/k}-1$ upper triangular matrices (denoted $U_0$, \dots, $U_{\ceil{n/k}-2}$).
One can identify the matrices $L_i$ and $U_i$ as matrices of some low and high short products. More precisely, the coefficients that appear in the lower triangular matrix $L_i$ are the coefficients of degree $ki$ to $k(i+1)-1$ of $f$. Thus, $L_i = \mat{\SPlo(f_{ki, k(i+1)})}$ where $f_{ki,k(i+1)} = \sum_{j=ki}^{k(i+1)-1} f_j X^{j-ki}$. Similarly, 
$U_i = \mat{\SPhi(f_{ki,k(i+1)})}$. The matrices $L_{\ceil{n/k}-1}$ and $U_{\ceil{n/k}-2}$ must be padded if $k$ does not divide $n$. Altogether, this proves that this part of the computation reduces to $\ceil{n/k}$ low short products and $\ceil{n/k}-1$ high short products, in size $k$.

In order for this algorithm to actually be in place, $k$ must be small enough. If the out-of-place short product algorithm uses $ck$ extra space, since we also need $k$ free registers to store the intermediate results, $k$ must satisfy $n-k\ge (c+1)k$, that is $k\le n/(c+2)$.

\begin{algorithm}
\caption{\reduction{\iSPlo}{\oSP}} 
\begin{algorithmic}[1]
\Input $f$ and $g$ of degree $<n$
\Output $\out$ contains $\SPlo(f,g)$
\BaseAlg Two short product algorithms \algo{oSP_{lo}} and \algo{oSP_{hi}} with space complexity $\le cn$
    \If{$n < c+2$}
        \State $\out\gets \SPlo(f,g)$ \Comment using a naive algorithm
    \Else
        \State $k\gets \floor{n/(c+2)}$
        \For{$i=0$ to $\ceil{n/k}-1$}
        \Comment work space: $\out_{[0..n-k[}$
        \State $\out_{[n-k..n[} +\!=\oSPlo(f_{ki,k(i+1)}, g_{n-k(i+1), n-ki)})$
        \EndFor
        \For{$i=0$ to $\ceil{n/k}-2$} \Comment same work space
        \State $\out_{[n-k..n[} +\!= \oSPhi(f_{ki,k(i+1)}, g_{n-k(i+2), n-k(i+1)})$
        \EndFor
        \State $\out_{[0.. n-k[}\gets \reduction{\iSPlo}{\oSP}(f \bmod X^{n-k}, g\bmod X^{n-k})$ 
    \EndIf
\end{algorithmic}
\end{algorithm}

\paragraph{Complexity analysis} The algorithm performs $\ceil{n/k}$ low short products and $\ceil{n/k}-1$ high short products plus one recursive call in size $n-k$. Let $\M(k)$ be the complexity of a low short product algorithm. Then the high short product can be computed in time $\M(k-1)$. Let $T(n)$ be the complexity of the recursive algorithm. Then $T(n) = \ceil{n/k} \M(k) + (\ceil{n/k}-1)\M(k-1) + 2(\ceil{n/k}-1)k + T(n-k)$ (the linear time is for the additions). Since $k=\floor{n/(c+2)}$, $\ceil{n/k}\le c+3$ for $n\ge (c+3)(c+2)$ and $n-k\le \frac{c+1}{c+2} n + 1$. Thus,
\[T(n)\le (c+3)\M\left(\frac{n}{c+2}\right) + (c+2)\M\left(\frac{n}{c+2}-1\right) + 2n+T\left(\frac{c+1}{c+2}n+1\right).\]
Using Corollary~\ref{cor:recurrence}, this equation yields $T(n)\le (2c+5)\M(n) + o(\M(n))$.

\subsection{In-place middle product algorithm} \label{sec:oMP2iMP} 

To build an in-place middle product algorithm, we assume that we have an algorithm for the middle product that uses $cn$ extra space to compute the middle product in size $(n,m)$ (that is with inputs of degree $<n+m-1$ and $<n$, respectively).

The in-place algorithm is again based on the tiling given in Fig.~\ref{fig:decompositions} (right): The top $k$ rows correspond to the matrix \mat{\MP(f\bmod X^k)} and the bottom $m-k$ rows to the matrix \mat{\MP(f\quo X^k)}. The algorithm consists in computing $\mat{\MP(f\bmod X^k)}\vec g$ using the out-of-place algorithm and then $\mat{\MP(f\quo X^k)}\vec g$ using a recursive call.

To make this algorithm work in place, the value of $k$ has to be adjusted so that the work space is large enough. The result of a middle product in size $k$ has degree $<k$ and needs $ck$ extra work space by hypothesis. Therefore, if $m-k \ge (c+1)k$, that is $k\le m/(c+2)$, the computation can be performed in place.

\begin{algorithm}
\caption{\reduction{\iMP}{\oMP}}
\begin{algorithmic}[1]
\Input $f$ and $g$ of degree $< n+m-1$ and $<n$ respectively
\Output $\out$ contains $\MP(f,g)$
\BaseAlg An out-of-place middle product algorithm \oMP with space complexity $\le cn$
    \If{$m < c+2$} 
        \State $\out\gets\oMP(f,g)$ \Comment using a naive algorithm
    \Else
        \State $k\gets\floor{m/(c+2)}$
        \State $\out_{[0..k[}\gets \oMP(f\bmod X^{n+k}, g)$ \Comment work space: $\out_{[k..m[}$
        \State $\out_{[k.. m[}\gets \reduction{\iMP}{\oMP}(f\quo X^k, g)$  \Comment recursive call
    \EndIf
\end{algorithmic}
\end{algorithm}

\paragraph{Complexity analysis} Let $\M(k)$ be the cost of an out-of-place balanced middle product algorithm. The cost of an unbalanced middle product is thus $\ceil{n/k}\M(k)$ for $k < n$. The in-place algorithm computes first a middle product using an out-of-place algorithm and then makes a recursive call on the remaining part. Note that $n$ does not change during the algorithm and can be viewed as a large constant, while $m$ is the parameter that varies. Then the cost of the algorithm verifies $T(m)\le \ceil{n/k} \M(k) + T(m-k)$. Since $k = \floor{m/(c+2)}$, $\ceil{n/k} < n(c+2)/(m-c-2)+1$ and $m-k \le (c+1)m/(c+2)+1$. Furthermore, $\M(k)\le m/n(c+2) \M(n)$, thus $\ceil{n/k}\M(k)\le (m/(m-c-2) + m/n(c+2))\M(n)$. That is,
\[T(m)\le \left(\frac{m}{n(c+2)} + \frac{c+2}{m-c-2} + 1\right)\M(n) + T\left(\frac{c+1}{c+2} m +1\right).\]
Corollary~\ref{cor:rec2} implies $T(n)\le \M(n)\log_{\frac{c+2}{c+1}}(n) + O(\M(n))$ for $m = n$.

\paragraph{Improvement for non quasi-linear algorithms} The extra logarithmic factor only occurs when $\M(n) =n^{1+o(1)}$. Suppose to the contrary that $\M(n)\le\lambda n^\gamma$ for some $\gamma > 1$. The recurrence now reads $T(m)\le \left(\frac{n(c+2)}{m-c-2}+1\right) \lambda\left(\frac{m}{c+2}\right)^\gamma + T(\frac{c+1}{c+2}m + 1)$. We claim that there exist constants $\mu$ and $\nu$ such that $T(m)\le \mu m^{\gamma-1}n + \nu m^\gamma + o(m^{\gamma-1}n+m^\gamma)$ and prove it by induction. Using the recurrence relation and the induction hypothesis,
\begin{multline*}
    T(m)\le \frac{\lambda nm^{\gamma-1}}{(c+2)^{\gamma-1}} + \frac{\lambda m^\gamma}{(c+2)^\gamma} 
    + \mu\left(\frac{c+1}{c+2}\right)^{\gamma-1} m^{\gamma-1}n \\+ \nu\left(\frac{c+1}{c+2}\right)^\gamma m^\gamma
    + o(m^{\gamma-1}n + m^\gamma).
\end{multline*}
The result follows as soon as $(\lambda+\mu(c+1)^{\gamma-1})/(c+2)^{\gamma-1}\le\mu$ and $(\lambda+\nu(c+1)^{\gamma})/(c+2)^{\gamma}\le\nu$. We can thus fix
\[\mu = \frac{\lambda}{(c+2)^{\gamma-1}-(c+1)^{\gamma-1}}\text{ and }\nu=\frac{\lambda}{(c+2)^\gamma-(c+1)^\gamma}.\]
Finally, taking $m = n$, we conclude that $T(n) \le (\mu+\nu)\lambda n^\gamma +O(n^{\gamma-1})$.

\paragraph{Reduction from short products to middle product}
The middle product of $f$ and $g$ can be computed as the sum of the low short product of $f\quo X^n$ with $g$ and the high short product of $f\bmod X^n$ with $g$. Yet this reduction does not preserve the space complexity since one needs to store the results of the two short products in two zones of size $n$ before summing them. Actually, the reduction given above from \oMP to \iMP can easily be adapted to a reduction from \SP to \MP that is space-preserving. Yet, the complexity also worsens with a logarithmic factor. Thus, we cannot conclude that $\MP\redto\SP$. 

\subsection{Resolution of recurrences}

\begin{lem}\label{lem:recsum}
Let $T(n)$ be a function satisfying $T(n) \le f(n) + T(\floor{\alpha n + \beta})$ for some $\alpha < 1$. Then
\[T(n) \le T(\floor{n_K}) + \sum_{i=0}^{K-1} f(n_i)\]
where $n_i = \alpha^i n + \beta\frac{1-\alpha^{i+1}}{1-\alpha}$ and $K \le \log_{1/\alpha}(n)$.
\end{lem}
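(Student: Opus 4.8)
The plan is to unroll the recurrence $K$ times while keeping track of the value fed to $T$, and only at the end replace that value by its closed form $n_i$. Concretely, set $a_0 = n$ and $a_{j+1} = \floor{\alpha a_j + \beta}$, so that $a_j$ is the argument reached after $j$ levels of recursion. Applying the hypothesis $T(x) \le f(x) + T(\floor{\alpha x + \beta})$ repeatedly, starting from $x = n = a_0$, an immediate induction on $K$ gives
\[ T(n) \le T(a_K) + \sum_{j=0}^{K-1} f(a_j). \]
It then remains to compare the actual iterates $a_j$ with the stated quantities $n_j = \alpha^j n + \beta\frac{1-\alpha^{j+1}}{1-\alpha}$.

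The key observation is that $n_j$ is exactly the solution of the \emph{floor-free} version of the recurrence on arguments: a one-line telescoping of the geometric sum shows that $n_{j+1} = \alpha n_j + \beta$, and $n_0 = n + \beta$. Since $\beta \ge 0$ (as in all our applications), $a_0 = n \le n_0$; and if $a_j \le n_j$ then $a_{j+1} = \floor{\alpha a_j + \beta} \le \alpha a_j + \beta \le \alpha n_j + \beta = n_{j+1}$, using $0 \le \alpha < 1$. By induction $a_j \le n_j$ for all $j$, and since each $a_j$ is an integer this even gives $a_j \le \floor{n_j}$.

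To finish, I would use that the functions $f$ and $T$ are non-decreasing — automatic for the running-time functions to which the lemma is applied, though worth recording as a hypothesis. Then $f(a_j) \le f(n_j)$ and $T(a_K) \le T(\floor{n_K})$, and substituting into the unrolled inequality yields the claimed bound $T(n) \le T(\floor{n_K}) + \sum_{j=0}^{K-1} f(n_j)$. For the range of $K$, note that $n_j \le \alpha^j n + \frac{\beta}{1-\alpha}$, so $n_j = \bigO 1$ once $\alpha^j n \le 1$, i.e.\ once $j \ge \log_{1/\alpha} n$; hence $K = \ceil{\log_{1/\alpha} n}$ iterations already drive the recursion to its constant-size base case. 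The only parts needing any care are the telescoping identity $n_{j+1} = \alpha n_j + \beta$ (in particular getting the exponent in $1-\alpha^{j+1}$ right) and the implicit monotonicity and sign assumptions ($f,T$ non-decreasing, $\alpha,\beta \ge 0$) used to pass from $a_j$ to $n_j$; everything else is routine.
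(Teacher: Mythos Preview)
Your argument is correct and follows essentially the same route as the paper: unroll the recurrence and bound the actual iterates by the floor-free sequence $n_i$ satisfying $n_{i+1}=\alpha n_i+\beta$. You are in fact more careful than the paper, which simply extends $T$ to the reals via $T(x):=T(\lfloor x\rfloor)$ and asserts $n_0=n$ (a slip, since $n_0=n+\beta$ as you compute), leaving the monotonicity of $f$ and $T$ and the sign conditions on $\alpha,\beta$ implicit; your version makes these assumptions explicit.
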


\begin{proof}
Let $T(x) = T(\floor x)$ for non integral $x$. By definition of $n_i$, $n = n_0$ and $T(n_i)\le f(n_i) + T(n_{i+1})$. Then by recurrence, $T(n) \le T(n_{i+1}) + \sum_{j=0}^i f(n_i)$. 
\end{proof}

\begin{lem}\label{lem:sumni}
Let $n_i = \alpha^i n + \beta\frac{1-\alpha^{i+1}}{1-\alpha}$. Then
\[\sum_{i=0}^{K-1} n_i 
    \le \frac{n+\beta K}{1-\alpha}.\]
\end{lem}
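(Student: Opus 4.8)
The plan is a direct computation: split $\sum_{i=0}^{K-1} n_i$ into a geometric part carrying the factor $n$ and a part carrying the factor $\beta$, then bound each partial geometric sum by its limit. Concretely, using $n_i = \alpha^i n + \beta\frac{1-\alpha^{i+1}}{1-\alpha}$,
\[
  \sum_{i=0}^{K-1} n_i = n\sum_{i=0}^{K-1}\alpha^i + \frac{\beta}{1-\alpha}\sum_{i=0}^{K-1}\bigl(1-\alpha^{i+1}\bigr).
\]

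For the first sum I would use $\sum_{i=0}^{K-1}\alpha^i = \frac{1-\alpha^K}{1-\alpha}\le\frac1{1-\alpha}$, which is valid since $0\le\alpha<1$, giving $n\sum_{i=0}^{K-1}\alpha^i\le\frac{n}{1-\alpha}$. For the second sum, each summand lies in $[0,1)$ because $0\le\alpha<1$, so $\sum_{i=0}^{K-1}(1-\alpha^{i+1})\le K$; as $\beta\ge 0$ in every recurrence to which the lemma is applied, this gives $\frac{\beta}{1-\alpha}\sum_{i=0}^{K-1}(1-\alpha^{i+1})\le\frac{\beta K}{1-\alpha}$. Adding the two bounds yields $\sum_{i=0}^{K-1}n_i\le\frac{n+\beta K}{1-\alpha}$, as claimed.

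I expect no real obstacle; the only care needed is to record the hypotheses $0\le\alpha<1$ and $\beta\ge0$ that license replacing the partial geometric sums by their limits. If one prefers to make the use of $\beta\ge0$ fully explicit, one can instead evaluate the second inner sum exactly as $K-\alpha\frac{1-\alpha^K}{1-\alpha}$ and then discard the nonnegative subtracted term.
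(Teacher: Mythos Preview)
Your proof is correct and follows essentially the same approach as the paper: split $\sum n_i$ into the geometric part carrying $n$ and the part carrying $\beta$, and bound each by its limit. The only cosmetic difference is that the paper evaluates the second inner sum exactly as $K/(1-\alpha) + (\alpha^{K+1}-\alpha)/(1-\alpha)^2$ and then discards the negative term (precisely your stated alternative), whereas your primary argument bounds each summand $1-\alpha^{i+1}$ by $1$; both routes tacitly require $\beta\ge 0$, which you rightly flag.
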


\begin{proof}
Since $\sum_{i=0}^{K-1} \alpha^i = (1-\alpha^K)/(1-\alpha)$ and $1-\alpha > 0$, $\sum_i \alpha^in\le n/(1-\alpha)$. Then, $\sum_i (1-\alpha^{i+1})/(1-\alpha) = K/(1-\alpha) + (\alpha^{K+1}-\alpha)/(1-\alpha)^2\le K/(1-\alpha)$ since $\alpha^{K+1} < \alpha$.
\end{proof}

\begin{lem}\label{lem:suminvni}
Let $n_i = \alpha^i n + \beta\frac{1-\alpha^{i+1}}{1-\alpha}$. Then
\[\sum_{i=0}^{K-1} \frac{1}{n_i-\beta/(1-\alpha)} = \frac{\alpha(\alpha^{-K}-1)}{(1-\alpha)n-\alpha\beta}.\]
\end{lem}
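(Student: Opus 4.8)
The plan is to reduce the sum to a finite geometric series by first simplifying the general term. The key observation is that subtracting $\beta/(1-\alpha)$ is exactly the shift that turns the affine quantity $n_i$ into a pure power of $\alpha$:
\[
n_i - \frac{\beta}{1-\alpha} = \alpha^i n + \frac{\beta(1-\alpha^{i+1}) - \beta}{1-\alpha} = \alpha^i n - \frac{\beta\alpha^{i+1}}{1-\alpha} = \alpha^i\cdot\frac{(1-\alpha)n - \alpha\beta}{1-\alpha}.
\]
So the first step I would carry out is this computation, reading off that $n_i - \beta/(1-\alpha) = \alpha^i\bigl((1-\alpha)n-\alpha\beta\bigr)/(1-\alpha)$.

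Next I would invert this to get $1/(n_i-\beta/(1-\alpha)) = (1-\alpha)\bigl/\bigl(\alpha^i((1-\alpha)n-\alpha\beta)\bigr)$ and sum over $i$ from $0$ to $K-1$. Everything independent of $i$ factors out of the sum, leaving $\sum_{i=0}^{K-1}\alpha^{-i}$. Applying the finite geometric sum formula with ratio $\alpha^{-1}$ gives
\[
\sum_{i=0}^{K-1}\alpha^{-i} = \frac{\alpha^{-K}-1}{\alpha^{-1}-1} = \frac{\alpha(\alpha^{-K}-1)}{1-\alpha},
\]
and multiplying this back by $(1-\alpha)/\bigl((1-\alpha)n-\alpha\beta\bigr)$ cancels the factor $1-\alpha$ and yields exactly the claimed closed form $\alpha(\alpha^{-K}-1)\bigl/\bigl((1-\alpha)n-\alpha\beta\bigr)$.

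I expect essentially no real obstacle: the argument is routine algebra. The only points deserving a little care are avoiding sign errors in the partial-fraction-style simplification of $n_i-\beta/(1-\alpha)$, and correctly handling the geometric series with ratio $\alpha^{-1}>1$ rather than the more familiar $\alpha<1$ regime used in \cref{lem:sumni}. One should also keep in mind that $(1-\alpha)n-\alpha\beta\neq 0$, which is harmless in all our applications since there $n$ plays the role of a large parameter.
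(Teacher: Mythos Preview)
Your proposal is correct and follows essentially the same approach as the paper: rewrite $n_i-\beta/(1-\alpha)$ as $\alpha^i$ times a constant, factor the constant out of the sum, and evaluate the remaining geometric series $\sum_{i=0}^{K-1}\alpha^{-i}$. The only cosmetic difference is that the paper writes the constant as $n-\beta\alpha/(1-\alpha)$ whereas you clear the denominator to $\bigl((1-\alpha)n-\alpha\beta\bigr)/(1-\alpha)$; these are of course the same.
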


\begin{proof}
Since $n_i = \alpha^i(n - \beta\alpha/(1-\alpha)) + \beta/(1-\alpha)$, $n_i - \beta/(1-\alpha)$ is a multiple of $\alpha^i$. Thus,
\[\sum_{i=0}^{K-1} \frac{1}{n_i-\beta/(1-\alpha)} = \frac{1}{n-\beta\alpha/(1-\alpha)} \sum_{i=0}^{K-1} \alpha^{-i}.\]
Then, $\sum_i \alpha^{-i} = (1-\alpha^{-K})/(1-1/\alpha) = \alpha(\alpha^{-K}-1)/(1-\alpha)$, and $\sum_i 1/(n_i-\beta/(1-\alpha)) = \alpha(\alpha^{-K}-1)/((1-\alpha)n-\alpha\beta)$.
\end{proof}

\begin{lem}\label{lem:sumMni}
If $\M(n)/n$ is non-decreasing, and $n_i = \alpha^i n + \beta(1-\alpha^{i+1})/(1-\alpha)$ for some $\alpha < 1$, then
\[\sum_{i=0}^{K-1} \M(\lambda n_i+\mu) 
    = \frac{\lambda}{1-\alpha} \M(n) + o(\M(n))\]
for $K \le\log_{1/\alpha}(n)$ and any $\lambda$ and $\mu$ such that $\lambda n_i+\mu \le n$ for all $n_i$.
\end{lem}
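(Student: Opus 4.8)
The plan is to reduce the sum to the case of a linear cost function and then quote \cref{lem:sumni}. The only property of $\M$ at our disposal is that $\M(x)/x$ is non-decreasing, and the hypothesis $\lambda n_i+\mu\le n$ for every $i<K$ is exactly what makes it usable: for each such $i$, monotonicity of $\M(x)/x$ yields
\[
  \M(\lambda n_i+\mu)\;\le\;\frac{\lambda n_i+\mu}{n}\,\M(n)
\]
(the degenerate case $\lambda n_i+\mu=0$ is trivial, since $\M$ then vanishes there). Summing over $0\le i<K$, the claim reduces to estimating $\sum_{i=0}^{K-1}(\lambda n_i+\mu)=\lambda\sum_{i=0}^{K-1}n_i+\mu K$.

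Next I would invoke \cref{lem:sumni}, which gives $\sum_{i=0}^{K-1}n_i\le (n+\beta K)/(1-\alpha)$. Substituting and collecting terms,
\[
  \sum_{i=0}^{K-1}\M(\lambda n_i+\mu)\;\le\;\frac{\M(n)}{n}\left(\frac{\lambda n}{1-\alpha}+\Bigl(\tfrac{\lambda\beta}{1-\alpha}+\mu\Bigr)K\right)\;=\;\frac{\lambda}{1-\alpha}\,\M(n)+\Bigl(\tfrac{\lambda\beta}{1-\alpha}+\mu\Bigr)\frac{K\,\M(n)}{n}.
\]
It then remains to check that the second summand is $o(\M(n))$: since $\alpha,\beta,\lambda,\mu$ are constants and $K\le\log_{1/\alpha}(n)$, the error term is $\bigO{\M(n)\log n/n}$, and $\log n/n\to 0$ makes this $o(\M(n))$. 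This yields the stated identity in the form of the upper bound $\sum_{i=0}^{K-1}\M(\lambda n_i+\mu)\le\tfrac{\lambda}{1-\alpha}\M(n)+o(\M(n))$, which is precisely the form in which the lemma is invoked in the complexity analyses of \cref{sec:oFP2iFP,sec:oSP2iSP}.

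I do not expect a genuine obstacle here; the only point needing care is that the error term really be of lower order, and it is worth saying why a contribution of size $\bigO K=\bigO{\log n}$ arises at all. Because of the additive perturbation $\beta(1-\alpha^{i+1})/(1-\alpha)$, the values $n_i$ tend to the constant $\beta/(1-\alpha)$ rather than to $0$, so $\sum_i n_i$ is not a convergent geometric sum but carries a surcharge of order $K$ on top of its main term $n/(1-\alpha)$; multiplying through by $\M(n)/n\le\M(n)$ is what absorbs it, using only the unconditional fact $\log n/n\to0$. (A two-sided estimate would additionally require $\M$ to be quasi-linear: for strictly super-linear $\M$ the sum is of genuinely smaller order than $\tfrac{\lambda}{1-\alpha}\M(n)$, but only the upper bound is ever used downstream.)
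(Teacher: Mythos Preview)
Your proof is correct and follows essentially the same route as the paper: use monotonicity of $\M(x)/x$ together with $\lambda n_i+\mu\le n$ to bound $\M(\lambda n_i+\mu)\le\frac{\lambda n_i+\mu}{n}\M(n)$, invoke \cref{lem:sumni} for $\sum_i n_i$, and note that the residual $K\M(n)/n$ term is $o(\M(n))$ since $K=O(\log n)$. Your write-up is in fact more explicit about the error analysis than the paper's, and your final remark that only the upper bound is needed downstream is spot on.
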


\begin{proof}
Since $\M(n)/n$ is non-decreasing, $M(\lambda n_i+\mu) \le \frac{\lambda n_i+\mu}{n}\M(n)$. Therefore, $\sum_i \M(\lambda n_i+\mu) \le \M(n)/n \sum_i \lambda n_i + \mu$. By Lemma~\ref{lem:sumni}, $\sum_i\M(\lambda n_i+\mu)\le \lambda \M(n)/(1-\alpha) + \lambda \beta K \M(n)/n(1-\alpha) + \mu K\M(n)/n$. Since $K = O(\log n)$, $K\M(n)/n = o(\M(n))$.
\end{proof}

\begin{cor}\label{cor:recurrence}
Let $T(n) \le \sum_k a_k \M(\lambda_k n+\mu_k) + bn+c + T(\alpha n + \beta)$ with $\alpha < 1$ and $\lambda_k n + \mu_k < n$ for all $k$. Then
\[T(n) \le \sum_k \frac{a_k\lambda_k}{1-\alpha} \M(n) + \frac{bn}{1-\alpha} + o(\M(n)).\]
The linear term is negligible but if $\M(n) = \bigO{n}$.
\end{cor}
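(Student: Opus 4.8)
The plan is simply to assemble the preceding lemmas, since the corollary is their immediate consequence. Write $f(n)=\sum_k a_k\M(\lambda_k n+\mu_k)+bn+c$, so that the hypothesis reads exactly $T(n)\le f(n)+T(\floor{\alpha n+\beta})$. I would first apply \cref{lem:recsum}, which gives
\[ T(n)\le T(\floor{n_K})+\sum_{i=0}^{K-1} f(n_i),\qquad n_i=\alpha^i n+\beta\tfrac{1-\alpha^{i+1}}{1-\alpha},\quad K\le\log_{1/\alpha}(n). \]
Since $\alpha<1$, the $n_i$ decrease geometrically towards the fixed point $\beta/(1-\alpha)=\bigO 1$, so one can take $K$ to be the first index at which $n_K$ falls below the base-case threshold; then $K=\bigO{\log n}$ and $T(\floor{n_K})=\bigO 1$, using the implicit convention that $T$ is bounded on constant-size inputs.

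The next step is to split $\sum_{i=0}^{K-1} f(n_i)$ into the three groups of terms and treat each with the appropriate summation lemma. For the multiplication terms, note $n_i\le n_0=n$, hence $\lambda_k n_i+\mu_k\le\lambda_k n+\mu_k\le n$ by hypothesis, so \cref{lem:sumMni} applies for each fixed $k$ and gives $\sum_{i=0}^{K-1}\M(\lambda_k n_i+\mu_k)=\tfrac{\lambda_k}{1-\alpha}\M(n)+o(\M(n))$; summing over $k$ produces the announced main term $\sum_k\tfrac{a_k\lambda_k}{1-\alpha}\M(n)$ up to $o(\M(n))$. (This uses that $\M(n)/n$ is non-decreasing, which we take as a standing assumption on multiplication time functions, and that the coefficients are non-negative, as in all our applications.) For the linear terms, \cref{lem:sumni} gives $\sum_{i=0}^{K-1} n_i\le\tfrac{n+\beta K}{1-\alpha}=\tfrac{n}{1-\alpha}+o(n)$ since $K=\bigO{\log n}$, so $b\sum_i n_i=\tfrac{bn}{1-\alpha}+o(n)$. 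Finally the constant terms sum to $cK=\bigO{\log n}=o(n)$.

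Collecting everything yields $T(n)\le\sum_k\tfrac{a_k\lambda_k}{1-\alpha}\M(n)+\tfrac{bn}{1-\alpha}+o(\M(n))+\bigO 1+o(n)$. Because any multiplication function satisfies $\M(n)=\Omega(n)$ and $\M(n)\to\infty$, every $o(n)$ term (and the $\bigO 1$ from the base case) is also $o(\M(n))$, so these absorb into the error and we obtain the claimed bound; and when $\M(n)$ is superlinear even $\tfrac{bn}{1-\alpha}$ lies in $o(\M(n))$, matching the closing remark. I do not expect any real obstacle here: the substance is entirely in \cref{lem:recsum,lem:sumni,lem:sumMni}, and the only care needed is to verify the side condition $\lambda_k n_i+\mu_k\le n$ that licenses \cref{lem:sumMni} and to make explicit the otherwise-implicit termination/base-case behaviour of $T$.
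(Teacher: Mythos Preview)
Your proposal is correct and follows essentially the same route as the paper: apply \cref{lem:recsum} with $f(n)=\sum_k a_k\M(\lambda_k n+\mu_k)+bn+c$, then bound the three groups of terms using \cref{lem:sumMni} and \cref{lem:sumni}, absorbing the $K=\bigO{\log n}$ contributions into $o(\M(n))$. If anything, you are a bit more careful than the paper in explicitly verifying the hypothesis $\lambda_k n_i+\mu_k\le n$ of \cref{lem:sumMni} and in handling the base-case term $T(\floor{n_K})$.
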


\begin{proof}
By Lemma~\ref{lem:recsum}, $T(n)\le T(n_K) + \sum_i f(n_i)$ with $n_i$ defined as in the lemma and $f(n) = \sum_k a_k \M(\lambda_k n+\mu_k)+bn+c$. Then
\begin{align*}
\sum_{i=0}^{K-1} f(n_i) &= \sum_k a_k \sum_{i=0}^{K-1} \M(\lambda_k n_i+\mu_k) + b\sum_{i=0}^{K-1} n_i + Kc\\
& \le \sum_k  a_k \left(\frac{\lambda_k}{1-\alpha} \M(n) + o(\M(n))\right) + b\frac{n+\beta K}{1-\alpha} + Kc\\
& = \sum_k \frac{a_k\lambda_k}{1-\alpha} \M(n) + \frac{bn}{1-\alpha} + o(\M(n))
\end{align*}
since $K = o(\M(n))$ and the sum over $k$ is of fixed size.
\end{proof}

\begin{cor}\label{cor:rec2}
Let $T(m)\le  (\lambda m/n + \mu/(m-\frac{1}{1-\alpha}) + 1)\M(n) + T(\alpha m+ 1)$ with $\alpha < 1$ and $m\le n$. Then for $m = n$,
\[T(n) \le \M(n)\log_{1/\alpha}(n) + \frac{\lambda + \mu\alpha}{1-\alpha} \M(n) + o(\M(n)).\]
\end{cor}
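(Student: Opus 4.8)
The plan is to unroll the recurrence with Lemma~\ref{lem:recsum} and then estimate the resulting sum term-by-term. Applying Lemma~\ref{lem:recsum} to $T(m)\le f(m) + T(\floor{\alpha m+1})$ with $f(x) = (\lambda x/n + \mu/(x-\tfrac{1}{1-\alpha}) + 1)\M(n)$ and $\beta = 1$ gives $T(n)\le T(\floor{n_K}) + \sum_{i=0}^{K-1} f(n_i)$ where $n_i = \alpha^i n + \tfrac{1-\alpha^{i+1}}{1-\alpha}$ and $K\le\log_{1/\alpha}(n)$. The base case $T(\floor{n_K})$ is $O(\M(n))$ (indeed $O(1)$ if the base case is handled naively), so the work is entirely in bounding the three pieces of $\sum_i f(n_i)$.

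First I would split $\sum_{i=0}^{K-1} f(n_i) = \tfrac{\lambda}{n}\M(n)\sum_i n_i + \mu\M(n)\sum_i \tfrac{1}{n_i - 1/(1-\alpha)} + K\M(n)$. The last term is exactly $\M(n)\log_{1/\alpha}(n)$ up to the bound $K\le\log_{1/\alpha}(n)$, which produces the leading term of the claimed estimate. For the first term, Lemma~\ref{lem:sumni} (with $\beta=1$) gives $\sum_i n_i \le \tfrac{n + K}{1-\alpha}$, so $\tfrac{\lambda}{n}\M(n)\sum_i n_i \le \tfrac{\lambda}{1-\alpha}\M(n) + \tfrac{\lambda K}{n(1-\alpha)}\M(n)$; since $K = O(\log n)$, the second summand is $o(\M(n))$ — this is where I would need $\M(n)$ at least polynomially growing, or rather just that $\M(n)/(\text{something}) $ absorbs the log, but in fact $\tfrac{K}{n}\M(n) = o(\M(n))$ holds unconditionally since $\M(n) = \bigO{n^2}$ is harmless here and $K/n\to 0$. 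For the middle term, Lemma~\ref{lem:suminvni} (with $\beta=1$) gives $\sum_i \tfrac{1}{n_i - 1/(1-\alpha)} = \tfrac{\alpha(\alpha^{-K}-1)}{(1-\alpha)n - \alpha}$; since $\alpha^{-K}\le n$ (from $K\le\log_{1/\alpha} n$), this is at most $\tfrac{\alpha(n-1)}{(1-\alpha)n-\alpha} \le \tfrac{\alpha}{1-\alpha}(1 + o(1))$, so $\mu\M(n)\sum_i(\cdots) \le \tfrac{\mu\alpha}{1-\alpha}\M(n) + o(\M(n))$.

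Collecting the three contributions yields $T(n) \le \M(n)\log_{1/\alpha}(n) + \tfrac{\lambda}{1-\alpha}\M(n) + \tfrac{\mu\alpha}{1-\alpha}\M(n) + o(\M(n)) = \M(n)\log_{1/\alpha}(n) + \tfrac{\lambda + \mu\alpha}{1-\alpha}\M(n) + o(\M(n))$, which is exactly the claimed bound. The main obstacle I anticipate is the middle term: controlling $\sum_i \tfrac{1}{n_i - 1/(1-\alpha)}$ requires noticing that $n_i - \tfrac{1}{1-\alpha}$ is a clean geometric-in-$\alpha^i$ quantity (which is precisely the content of Lemma~\ref{lem:suminvni}), and then one must be careful that the closed form $\tfrac{\alpha(\alpha^{-K}-1)}{(1-\alpha)n-\alpha}$ is genuinely $O(1)$ — this hinges on the hypothesis $m\le n$ ensuring the recursion depth $K$ satisfies $\alpha^{-K}\lesssim n$, so that the $\alpha^{-K}$ in the numerator is cancelled by the $n$ in the denominator rather than blowing up. Everything else is routine substitution into the three preceding lemmas with $\beta = 1$.
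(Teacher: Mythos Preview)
Your proposal is correct and follows essentially the same approach as the paper: unroll via Lemma~\ref{lem:recsum} with $\beta=1$, split the sum into the constant, linear, and inverse pieces, and bound them respectively by $K\M(n)$, Lemma~\ref{lem:sumni}, and Lemma~\ref{lem:suminvni}, then set $m=n$ and use $\alpha^{-K}\le n$ to collapse the inverse term to $\tfrac{\mu\alpha}{1-\alpha}\M(n)+o(\M(n))$. The only cosmetic difference is that the paper keeps the parameter $m$ general throughout and specializes to $m=n$ at the very end, whereas you set $m=n$ before unrolling; the computations are otherwise identical.
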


\begin{proof}
By Lemma~\ref{lem:recsum},
\[T(m)\le T(m_K) + \M(n) \sum_i \left(\frac{\lambda m_i}{n} + \frac{\mu}{m_i-1/(1-\alpha)} + 1\right)\]
where $m_i = \alpha^i m + (1-\alpha^{i+1})/(1-\alpha)$. By Lemma~\ref{lem:sumni}, $\sum_i m_i \le (m+K)/(1-\alpha)$ and by Lemma~\ref{lem:suminvni}, $\sum_i 1/(m_i-\frac{1}{1-\alpha}) \le \alpha^{-K+1}/((1-\alpha)m-\alpha)$. Altogether,
\[T(m)\le T(m_K) + K\M(n) + \frac{\lambda(m+K)}{n(1-\alpha)} \M(n) + \frac{\mu\alpha}{1-\alpha}\cdot \frac{(1/\alpha)^K}{m-\alpha/(1-\alpha)}\M(n).\]
If we plug $K = \log_{1/\alpha}(m)$ and fix $m = n$, we get
\[T(n)\le T(n_K) + \M(n)\log_{1/\alpha} n + \frac{\lambda + \mu\alpha}{1-\alpha}\M(n) + o(M(n)).\]
\end{proof}

\section{Perspectives}

We have presented algorithms for polynomial multiplication problems
which are efficient in terms of both time and space.
Our results show that any
algorithm for the full and short products of polynomials can be turned into
another algorithm
with the same asymptotic time complexity while using only $\bigO{1}$
extra space. We obtain similar results for the middle product but only
proved it for algorithms that do not have a quasi-linear time complexity. In the
latter case, an increase of the time complexity by a logarithmic factor occurs.
We provided analysis of our reductions that make their constants explicit. In
particular, their values ensure that our reductions are practicable.

In a future work, we plan to address some remaining issues. By examining the
constants in the already known algorithms, we can choose the algorithms to use
as starting points of our reductions to optimize the complexity. For instance
three variants of Karatsuba's algorithm with different time and space 
complexities are known~\cite{roche:2009,Thome:2002,karatsuba}. Furthermore,
it seems possible to improve on the complexity of low-space versions of
Karatsuba's and Toom-Cook's algorithm, yielding faster in-place algorithms
through our reductions. Another promising approach is to slightly relax the
model of computation and work in model in which one can write on the input
space as long as the original inputs are restored by the end of the
computation. Preliminary results for Karatsuba's algorithm suggest that this
could also yield a lower constant in the time complexity.

Finally, we have stated to explore the design of in-place
algorithms for a broader range of problems of polynomials, such as division
or evaluation/interpolation. The use of in-place middle and short products
becomes crucial since one needs to avoid any increase in the size of the
intermediate results.

\section*{Acknowledgements}
This work was begun while the last author was graciously
  hosted by the LIRMM at the Universit\'e Montpellier.

  This work was supported in part by the
  \grantsponsor{nsf}{National Science Foundation}{https://nsf.gov/}
  under grants
  \grantnum[https://www.nsf.gov/awardsearch/showAward?AWD_ID=1319994]{nsf}{1319994}
  and
  \grantnum[https://www.nsf.gov/awardsearch/showAward?AWD_ID=1618269]{nsf}{1618269}.
\balance

\bibliographystyle{abbrv}
\newcommand{\Gathen}{\relax}\newcommand{\Hoeven}{\relax}

\end{document}